\theoremstyle{plain}
\newtheorem{theorem}{Theorem}
\theoremstyle{definition}
\newtheorem{claim}[theorem]{Claim}
\newcommand\bbN{\mathbb{N}}
\newcommand\angles[1]{\langle #1 \rangle}
\begin{document}

\title{A finer reparameterisation theorem for MSO and FO queries on strings}
\author{{\fontencoding{T5}\selectfont Lê Thành Dũng (Tito) Nguyễn} \and Paweł Parys}
\date{}
\maketitle

\begin{abstract}
  We show a theorem on monadic second-order k-ary queries on finite words. It may be illustrated by the following example: if the number of results of a query on binary strings is O(number of 0s $\times$ number of 1s), then each result can be MSO-definably identified from a 0-position, a 1-position and some finite data.

  Our proofs also handle the case of first-order logic / aperiodic monoids. Thus we can state and prove the folklore theorem that dimension minimisation holds for first-order string-to-string interpretations.
\end{abstract}

For an MSO query given by a formula $\varphi(x_1,\dots,x_k)$ (using first-order
free variables $x_i$) on words over a finite alphabet $\Sigma$, and
$w\in\Sigma^*$, we write
\[  \#\varphi(w) = \mathrm{Card}(\{\vec{i} \in \{1,\dots,|w|\}^k \mid w
  \models \varphi(\vec{i})\})\]
Our main result (originally conjectured by Thomas Colcombet in personal communication) is:
\begin{theorem}\label{thm:main}
  Let $\varphi(x_1,\dots,x_k)$ and $\eta_1(x),\dots,\eta_\ell(x)$ be MSO (resp.\ FO)
  formulas such that
  \[\#\varphi(w) = O(\#\eta_1(w) \times \dots \times
    \#\eta_\ell(w))\]
  Then there exists an MSO (resp.\ FO) formula $\psi(x_1,\dots,x_k,y_1,\dots,y_\ell)$,
  which defines for each word a \emph{total functional} relation
  \[ \{\vec{i}  \mid w
    \models \varphi(\vec{i})\} \to \{\vec{j} \mid w \models \eta_1(j_1) \land
    \dots \land \eta_\ell(j_\ell)\} \]
  where \emph{each $\vec{j}$ has $O(1)$ many preimages $\vec{i}$}.
\end{theorem}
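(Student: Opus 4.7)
My plan is to proceed by structural recursion on a Simon factorisation forest of $w$, using Colcombet's aperiodic version of the forest theorem for the FO case. Let $M$ be the syntactic monoid of the joint behaviour of $\varphi, \eta_1, \ldots, \eta_\ell$ on word positions. Every $w$ then admits an MSO/FO-definable factorisation forest of height $h_M = O(1)$ in which every internal node is either binary or an idempotent block whose children share a common idempotent value of $M$. Each tuple $\vec{i}$ has a canonical LCA node $v(\vec{i})$, and I would define $\psi$ by recursion on this forest, handling the easy base cases $k=0$ (where the empty tuple is sent to a definable tuple of leftmost $\eta_j$-positions) and $\ell=0$ (where $\#\varphi$ is globally bounded and $\psi$ is the constant map to the empty tuple) directly.

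The crux is a \textbf{localisation lemma}: for every subtree $T$ of the forest and every prescribed context (the $M$-types of the left-, right-, and above-siblings of $T$), writing $\#\varphi^T$ for the number of $\varphi$-tuples with LCA in $T$ matching that context, and $\#\eta_j^T$ for the analogous local count of $\eta_j$-positions inside $T$, I expect the inequality
\[
   \#\varphi^T \;\leq\; C' \cdot \#\eta_1^T \cdots \#\eta_\ell^T
\]
to hold with a constant $C'$ depending only on $\varphi$ and the $\eta_j$. I would prove this by a pumping argument: if some subtree $T$ of some $w$ violated the bound, one could replace $T$ by iterates of an enclosing idempotent block in a way that preserves $M$-types and multiplies $\#\varphi$ faster than $\prod_j \#\eta_j$, contradicting the global hypothesis. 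Aperiodicity is what makes the pumped words FO-indistinguishable in the FO case.

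Armed with the local bound, the construction of $\psi$ runs as follows. At a binary node, one splits cases by which child contains each coordinate of $\vec{i}$ and recurses on strictly smaller subproblems. At an idempotent block of length $L$, the tuples with LCA at this node and a fixed ``straddle pattern'' over the children grow polynomially in $L$ of some degree $d \geq 0$; the local bound forces $\prod_j \#\eta_j^{T_v}$ to grow at least as fast, so one can definably pair each free coordinate of $\vec{i}$ with a distinct $\eta_j$-position inside the block using order statistics among $\eta_j$-positions of the correct local type (leftmost, second-leftmost, and so on). The \textbf{main obstacle} is the localisation lemma itself: translating a purely global asymptotic bound into uniform subtree bounds requires a careful choice of which factor to pump and careful bookkeeping of how LCAs migrate as one substitutes iterated idempotent blocks, and I expect this combinatorial accounting to absorb most of the paper's technical weight.
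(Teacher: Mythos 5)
Your plan does take the right high-level tools (factorisation forests, aperiodic variant for FO) and is a genuinely different route from the paper's, but the localisation lemma on which the whole recursion hinges is false. Take $\Sigma=\{a,b\}$, $\varphi(x):=a(x)\land\exists z\,b(z)$, $\eta_1(x):=b(x)$, $\eta_2(x):=a(x)$. The global hypothesis holds with constant $1$: if $w$ has no $b$ then $\#\varphi(w)=0$, and otherwise $\#\varphi(w)=\#a(w)\le\#b(w)\cdot\#a(w)$. Now let $w=ba^n$ and let $T$ be the idempotent block of the $n$ letters $a$. Every position of $T$ satisfies $\varphi$ (the $b$ to the left witnesses $\exists z$), so $\#\varphi^T=n$; yet $\#\eta_1^T=0$ because $T$ contains no $b$, so $C'\cdot\#\eta_1^T\cdot\#\eta_2^T=0<\#\varphi^T$ for every $n\ge1$ and every constant $C'$. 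Your pumping argument cannot rescue this: iterating the enclosing idempotent block (i.e.\ replacing $a^n$ by $a^{nm}$) scales $\#\varphi$ and $\#\eta_2$ by $m$ and leaves $\#\eta_1=1$, so the global product grows exactly as fast as $\#\varphi$ and no contradiction appears. The deeper problem is that the $\eta_j$ witnessing the global bound for a given tuple $\vec{i}$ may live \emph{outside} any subtree containing $\vec{i}$, and a fixed tuple-independent decomposition of $w$ by subtrees cannot see them.

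This is precisely what the paper's argument is designed around. Instead of a decomposition of $w$, the paper attaches to each satisfying tuple $\vec{i}$ a graph on $\{1,\dots,k\}$ via a ``points to'' relation between anchor nodes, takes its minimal strongly connected components $\mathcal{M}$, and pumps the (tuple-dependent) blocks $\mathrm{block}(S)$ for $S$ in any $\mathcal{S}\subseteq\mathcal{M}$. This yields $\#\varphi(w^\mathcal{S}_n)\ge n^{|\mathcal{S}|}$, and the global hypothesis forces at least $|\mathcal{S}|$ of the $\eta_p$ to have a witness inside $\bigcup_{S\in\mathcal{S}}\mathrm{block}(S)$; Hall's marriage theorem then produces a system of distinct representatives $S\mapsto p_S$, from which $\vec{j}$ is assembled (witnesses of $\eta_{p_S}$ taken inside $\mathrm{block}(S)$, the remaining $j_p$ taken elsewhere in $w$). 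To salvage a recursion on the forest you would at least need the weaker bound $\#\varphi^T\le C'\prod_j\max(1,\#\eta_j^T)$ plus a mechanism for routing the coordinates $j_p$ with $\#\eta_p^T=0$ to witnesses outside $T$, and making that mechanism consistent across coordinates is exactly the matching problem that Hall's theorem solves in the paper.
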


For MSO, the special case $\eta_1(x) = \dots = \eta_\ell(x) = \mathtt{true}$ was first stated by Bojánczyk in~\cite[Lemma~6.2]{PolyregSurvey}, with the proof of a slight variant appearing in~\cite{PolyregularGrowth}. This proof uses Imre Simon's factorisation forest theorem. Our proof of \Cref{thm:main} is mainly based on ideas from~\cite{PolyregularGrowth}, but with choices of exposition heavily inspired from~\cite{gaetanPhD}. The case where the $\eta_i$ are trivial has been extended:
\begin{itemize}
  \item to trees in~\cite[Thm.~1.3]{StructPoly} using arguably simpler tools (combinatorics of weighted automata), but we have not found a way to apply this alternative approach to derive~\Cref{thm:main};
  \item to countable linear orders in~\cite{RabinovichWoLLIC} (see also~\cite{RabinovichICALP} on MSO set queries).
\end{itemize}
For FO, even the \enquote{$\eta_i$ trivial} case has not appeared previously in the literature; it is not just a corollary of the MSO case since we need to ensure that $\psi$ is in FO.\@ But the proof scheme from~\cite{PolyregularGrowth} still applies, using an aperiodic version of factorisation forests.\footnote{This observation comes from Bojańczyk (personal communication).} Our proof of \Cref{thm:main} establishes the MSO and FO cases simultaneously.

The original motivation of~\cite[Lemma~6.2]{PolyregSurvey} was to prove a dimension minimisation theorem for string-to-string MSO interpretations~\cite[Theorem~6.1]{PolyregSurvey} (see also~\cite[Theorem~1.5]{StructPoly} for trees). The reduction to a result on MSO queries performed in~\cite{PolyregSurvey,StructPoly} is an elementary syntactic manipulation that also works for FO.\@ Therefore, thanks to the FO and \enquote{$\eta_i$ trivial} case of \Cref{thm:main}, dimension minimisation holds for FO interpretations:
\begin{theorem}
  An FO interpretation that defines a string-to-string function $f$ such that $|f(w)| = O(|w|^\ell)$ can be effectively translated to an \emph{$\ell$-dimensional} FO interpretation that also defines $f$. 
\end{theorem}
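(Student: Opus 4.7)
My plan is to derive this from the FO, \enquote{$\eta_i$ trivial} case of \Cref{thm:main} by the same syntactic reduction used for MSO in~\cite[Section~6]{PolyregSurvey} (see also~\cite{StructPoly}), which the paper already observes works verbatim for FO. Concretely, write the given $k$-dimensional FO interpretation defining $f$ as a tuple of FO formulas $\varphi_{\mathrm{dom}}(\vec{x})$, $\varphi_{<}(\vec{x},\vec{x}')$, and $\varphi_a(\vec{x})$ for each output letter $a$, with $\vec{x}$ ranging over $\{1,\dots,|w|\}^k$. The hypothesis rewrites as $\#\varphi_{\mathrm{dom}}(w) = |f(w)| = O(|w|^\ell)$, so \Cref{thm:main} applied with $\varphi := \varphi_{\mathrm{dom}}$ and $\eta_1 = \dots = \eta_\ell = \texttt{true}$ yields an FO formula $\psi(\vec{x},\vec{y})$ whose graph is, for each $w$, a total function
\[ \Phi_w : \{\vec{i} \mid w \models \varphi_{\mathrm{dom}}(\vec{i})\} \;\to\; \{1,\dots,|w|\}^\ell \]
with some absolute constant $C$ bounding $|\Phi_w^{-1}(\vec{j})|$ uniformly in $w$ and $\vec{j}$.

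Next I would recode the interpretation. The target has output positions in $\{1,\dots,|w|\}^\ell \times \{1,\dots,C\}$; the finite tag $c$ is absorbed into the \enquote{copies} parameter of an $\ell$-dimensional interpretation and hence does not affect the dimension. For each $c \in \{1,\dots,C\}$, let $\chi_c(\vec{x},\vec{y})$ be the FO formula saying that $\vec{x}$ is the $c$-th element in lexicographic order among the preimages of $\vec{y}$ under $\psi$; since $C$ is constant, $\chi_c$ is a Boolean combination of FO statements counting up to $C$ preimages below $\vec{x}$. Set $\varphi'_{\mathrm{dom}}(\vec{y},c) := \exists \vec{x}.\,\chi_c(\vec{x},\vec{y})$, rewrite each $\varphi_a(\vec{y},c)$ as $\exists \vec{x}.\,\chi_c(\vec{x},\vec{y}) \wedge \varphi_a(\vec{x})$, and translate $\varphi_{<}$ through $\chi_c$ and $\chi_{c'}$ in both pairs of variables. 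This produces an $\ell$-dimensional FO interpretation computing $f$.

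The whole argument collapses if $C$ depends on $w$ or if the \enquote{copies} mechanism is not part of the formalism; both are precisely handled, the first by \Cref{thm:main}, the second by standard conventions. The only subtlety is ensuring that we have not accidentally reintroduced MSO quantifiers: by construction each $\chi_c$ involves only FO counting up to a constant, so it stays first-order, and the translated formulas inherit this property. In other words, all the work specific to FO (as opposed to MSO) was already absorbed into \Cref{thm:main}, and the remainder of the proof is a syntactic rewriting that is indifferent to the choice of logic—which is exactly why obtaining the FO version of the query theorem is the genuinely new contribution.
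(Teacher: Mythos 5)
Your proposal is correct and matches the paper's approach exactly: the paper merely remarks that the reduction from interpretations to queries performed in~\cite{PolyregSurvey,StructPoly} is a logic-agnostic syntactic manipulation and combines it with the FO, $\eta_i$-trivial case of \Cref{thm:main}, which is precisely what you spell out (including the key observation that bounded lexicographic counting keeps $\chi_c$ in FO). The only small point glossed over is that a $k$-dimensional interpretation generally also carries a finite \emph{copies} parameter, so $\varphi_{\mathrm{dom}}$ is really a finite family of FO formulas indexed by a copy; applying \Cref{thm:main} to each member and taking the maximum preimage bound handles this without affecting the dimension.
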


\section{Proof of the main theorem}

\paragraph{Recognizing queries.}

Consider now a query $\varphi$ and assume that $\varphi(x_1,\dots,x_k)
\Rightarrow x_1 < \dots < x_k$ holds for all words, w.l.o.g. We can then denote
$aaabbabba \models \varphi(3,5,9)$ as
$aa\underline{a}b\underline{b}abb\underline{a} \models \varphi$.

We say that a monoid morphism $\mu\colon\Sigma^*\to M$ recognizes $\varphi$ when the
images of the maximal infixes without distinguished position -- in the above
example, $\mu(aa)$, $\mu(b)$, $\mu(abb)$ and $\mu(\varepsilon)$ -- together with
the values of the distinguished letters ($a,b,a$ above) suffice to determine
whether $\varphi$ is satisfied. A query is recognizable by a morphism to a finite (resp.\ finite and aperiodic) monoid if and only if it is
MSO-definable (resp.\ FO-definable).

\paragraph{Factorization forests.}

Let $\mu\colon\Sigma^* \to M$ be a morphism to a finite monoid. We define a $\mu$-\emph{forest} for $w\in\Sigma^+$ by induction as:
\begin{itemize}
\item either the leaf $w$ when $w\in\Sigma$
\item or the node $\angles{f_1}\dots\angles{f_n}$ whose children $f_1,\dots,f_n$
  are forests for $w_1,\dots,w_n$ such that $w = w_1 \cdot \ldots \cdot w_n$
        with $n\geq2$, and with the condition that if $n\geq3$ then:
        \begin{itemize}
          \item $\mu(w_1) = \dots = \mu(w_n)$;
          \item $\mu(w_1)^{|M|} = \mu(w_1)^{|M|+1}$ --- beware, this final condition is a bit idiosyncratic, and meant to unify the MSO and FO cases.
        \end{itemize}
\end{itemize}
The \emph{height} is the maximum nesting of $\angles{-}$.
\begin{theorem}
  There exists a \emph{rational function} $\Sigma^* \to (\Sigma\cup\{\langle,\rangle\})^*$ that produces for each input word (the string representation of) some $\mu$-forest for that word of bounded height, with the expected origin semantics.
  Furthermore, when $M$ is aperiodic, this function can be taken to be FO-rational.
\end{theorem}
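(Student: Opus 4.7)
My approach is to separate the existence of a bounded-height $\mu$-forest from its construction as a (FO-)rational function, then combine the two.

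For existence, I would appeal to Simon's classical factorization forest theorem for arbitrary finite $M$, together with its known aperiodic refinement when $M$ is aperiodic. Both produce forests of height $O(|M|)$ in which every large node uses an idempotent $e$ with $\mu(w_1) = \dots = \mu(w_n) = e$. Any idempotent trivially satisfies $e^{|M|} = e = e^{|M|+1}$, and in an aperiodic monoid of size $|M|$ every element $x$ satisfies $x^{|M|} = x^{|M|+1}$ by pigeonhole on the sequence of powers. So the outputs of both theorems are $\mu$-forests in the sense of the definition, and the idiosyncratic condition $\mu(w_1)^{|M|} = \mu(w_1)^{|M|+1}$ is the right common weakening that covers both regimes.

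For the transducer realization, I would iteratively build the forest by composing a one-level transduction $f\colon\Sigma^* \to (\Sigma \cup \{\langle,\rangle\})^*$ with itself $O(|M|)$ times. The one-level transduction reads the input once and inserts a single outer layer of brackets delimiting the children of the root of the would-be forest, basing its decisions on the sequence of prefix images $\mu(w_1\dots w_i)$. This sequence is produced by a finite deterministic automaton (whose transition monoid embeds into $M$, hence aperiodic when $M$ is), so $f$ is rational, and FO-rational when $M$ is aperiodic. A canonical greedy-leftmost splitting rule makes $f$ deterministic and assigns every bracket an origin at an adjacent input letter. Since (FO-)rational functions are closed under composition with tracked origins, the $O(|M|)$-fold iteration — reading past previously inserted brackets at each step — yields the desired function.

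The main obstacle is to specify the one-level transduction correctly: I must pin down a canonical splitting strategy that \emph{always} produces valid $\mu$-forest children (equal $\mu$-image, $|M|$-stable when $n \geq 3$), that is determined by local inspection of prefix images, and that is FO-definable when $M$ is aperiodic. One also has to check that the iteration really converges in a bounded number of steps using the height bound from Simon/aperiodic-Simon, and that the $O(|M|)$-fold composition of aperiodic rational transductions remains aperiodic. Once these are in place, chaining the one-level transductions immediately delivers a single (FO-)rational function producing a bounded-height $\mu$-forest for every input, with the required origin semantics.
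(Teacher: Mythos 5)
Your first paragraph matches the paper's key observation: the condition $\mu(w_1)^{|M|} = \mu(w_1)^{|M|+1}$ is deliberately chosen so that it is implied both by idempotence (which the classical factorisation forest theorem gives for arbitrary finite $M$) and by aperiodicity of $M$ (so that the FO-rational construction need not produce idempotent nodes). This is exactly how the paper justifies the ``idiosyncratic'' clause. A small wrinkle: you assert that the aperiodic refinement ``also produces forests \dots\ in which every large node uses an idempotent,'' but the paper's point is that the FO-rational version does \emph{not} give idempotency, and that aperiodicity rescues the weaker condition. Your conclusion is still right, but via an unnecessary detour.

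The real problem is in the second and third paragraphs. The paper does not attempt to build the transducer at all: it simply cites known results --- a rational function computing a bounded-height Simon forest in the general case (\cite[Theorem~2.21]{gaetanPhD}) and an FO-rational function computing a bounded-height aperiodic factorisation forest when $M$ is aperiodic (\cite[Lemma~6.5]{polyregular}, proved in \cite[Appendix~B]{msoInterpretations}). You instead try to reconstruct these results by iterating a ``one-level'' transduction $O(|M|)$ times, peeling off one layer of brackets from the top per pass. This is a genuinely different and, as sketched, unsound route. The difficulty is exactly the one you flag: you would have to exhibit a local, (FO-)definable splitting rule on prefix images $\mu(w_1\cdots w_i)$ that always produces valid children (equal, $|M|$-stable images when the fan-out is $\geq 3$), \emph{and} guarantees that the children are strictly ``simpler'' so that the process stops after $O(|M|)$ passes. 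No such greedy-leftmost rule is known, and neither of the standard constructions has this shape: Simon's proof recurses over the algebraic structure of $M$ (ideals / $\mathcal{J}$-classes, with changing alphabet), and the deterministic rational construction relies on a single-pass ``forward Ramsey split'' that annotates each position with its level rather than building the tree top-down layer by layer. Without a concrete splitting rule and a progress measure, the termination claim is unsubstantiated; the closure of (FO-)rational functions under composition is the only step that is actually routine. In short, the statement is a thin wrapper around two nontrivial known theorems, and the intended (and the paper's) proof is to invoke them rather than to reprove them from scratch.
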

\begin{proof}
  In the non-aperiodic case, the Factorisation Forest Theorem states the existence of a $\mu$-forest of bounded height with the last item replaced by the stronger property that $\mu(w_1)$ is \emph{idempotent}, and a well-known refinement (see e.g.~\cite[Theorem~2.21]{gaetanPhD}) gives us a rational function that computes it.

  In the aperiodic case, an FO-rational function can produce forests without this idempotence property according to~\cite[Lemma~6.5]{polyregular}\footnote{For a proof, see~\cite[Appendix~B]{msoInterpretations}.} and then $\mu(w_1)^{|M|} = \mu(w_1)^{|M|+1}$ follows from the definition of the aperiodicity of $M$.
\end{proof}

\paragraph{Navigating in forests.}

Changing the definitions of~\cite[Section~2.3]{gaetanPhD} slightly to fit our notion of $\mu$-forest, we say that a node $\mathfrak{m}$ in a
forest \emph{observes} a node $\mathfrak{n}$ when $\mathfrak{n}$ is a sibling at distance at most $|M|$ (when ordering the siblings from left to right) of an ancestor of $\mathfrak{m}$. (Here, any node is its own sibling at distance 0, and it is also its own ancestor.) A node is \emph{iterable} when it has at least $|M|$ left siblings and at least $|M|$ right siblings (not including itself).

Original terminology: the \emph{anchor} of a leaf is either its lowest iterable
ancestor, if it has any, or the root otherwise. (This is
related to the \enquote{frontiers} in~\cite[\S2.3]{gaetanPhD}.) When the anchor
of a leaf~$i$ observes the anchor of a leaf $j$, we say that $i$
\emph{points to} $j$.

Fix a $k$-ary MSO (resp.\ FO) query $\varphi$ and unary queries $\eta_1,\dots,\eta_\ell$, a
morphism $\mu\colon\Sigma^* \to M$ recognizing all those queries, and a rational (resp.\ FO-rational)
function computing factorization forests of bounded height for $\mu$. For
$w\in\Sigma^*$, we speak of \emph{the} forest of $w$ to refer to the one
returned by the aforementioned function.

Let $(w,i_1,\dots,i_k)$ such that $w \models \varphi(i_1,\dots,i_k)$. Let us
consider the following graph: the vertices are $\{1,\dots,k\}$ and there is an
edge from $p$ to $q$ when $i_p$ points to $i_q$.
\begin{claim}
  Let $S$ be a strongly connected component of this graph. Then all the nodes in the set
  $\mathrm{anchors}(S)=\{\mathrm{anchor}(i_p) \mid p \in S\}$ is are siblings in
  the forest of $w$, and any two consecutive members of that set are at distance at most $|M|$. (This includes the case where it is the singleton containing
  the root.)
\end{claim}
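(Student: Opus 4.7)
The plan is to set up a simple monotonicity lemma for the \emph{observes} relation, use strong connectedness to upgrade the relevant inequalities to equalities, and then close with a gap--crossing argument for the distance bound.

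\emph{Step~1 (depth monotonicity).} Unfolding the definition, if $\mathfrak{m}$ observes $\mathfrak{n}$ then $\mathrm{depth}(\mathfrak{n}) \leq \mathrm{depth}(\mathfrak{m})$: indeed $\mathfrak{n}$ is a sibling of some ancestor $\mathfrak{a}$ of $\mathfrak{m}$, so $\mathrm{depth}(\mathfrak{n}) = \mathrm{depth}(\mathfrak{a}) \leq \mathrm{depth}(\mathfrak{m})$. Moreover, equality forces $\mathfrak{a} = \mathfrak{m}$, and hence $\mathfrak{n}$ to be a sibling of $\mathfrak{m}$ itself at sibling distance $\leq |M|$.

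\emph{Step~2 (mutual sibling-ness of $\mathrm{anchors}(S)$).} Fix any edge $p \to q$ in $S$ and, using strong connectedness, extend it to a directed cycle. Along this cycle Step~1 makes the sequence of anchor depths weakly decreasing and forces it to return to its starting value, so it is constant. Applying the equality case of Step~1 to every edge of $S$ then shows that the anchors of its two endpoints are siblings at sibling distance $\leq |M|$, hence share a parent; by transitivity along the directed connectivity of $S$, all anchors in $\mathrm{anchors}(S)$ share a common parent. The edge case ``$\mathrm{anchors}(S)=\{\text{root}\}$'' mentioned in the claim is degenerate: as soon as a single anchor is the root (depth~$0$), the constant-depth argument forces every anchor to be the root as well, and the conclusion is trivially satisfied.

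\emph{Step~3 (consecutive bound).} Order the anchors $\mathfrak{a}_1 < \dots < \mathfrak{a}_r$ in left-to-right sibling order and fix $j<r$. Take a directed path in $S$ from a vertex with anchor $\mathfrak{a}_j$ to one with anchor $\mathfrak{a}_{j+1}$, and read off the resulting sequence of anchors $\mathfrak{b}_0 = \mathfrak{a}_j, \mathfrak{b}_1, \dots, \mathfrak{b}_m = \mathfrak{a}_{j+1}$, each pair $\mathfrak{b}_{i-1}, \mathfrak{b}_i$ being siblings at distance $\leq |M|$ by Step~2. Every $\mathfrak{b}_i$ lies in $\mathrm{anchors}(S)$, so by consecutivity of $\mathfrak{a}_j, \mathfrak{a}_{j+1}$ it is either $\leq \mathfrak{a}_j$ or $\geq \mathfrak{a}_{j+1}$. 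Taking the first index $i$ with $\mathfrak{b}_i \geq \mathfrak{a}_{j+1}$, the single edge $\mathfrak{b}_{i-1} \to \mathfrak{b}_i$ straddles the gap, so its sibling distance ($\leq |M|$) upper-bounds that of $\mathfrak{a}_j, \mathfrak{a}_{j+1}$.

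The conceptual heart is Step~3: the chain of local bounds provided by Step~2 along a long path in $S$ does not a priori control the sibling distance between the two chosen endpoints, and one has to use the fact that every intermediate anchor of the path itself lies in $\mathrm{anchors}(S)$ in order to conclude that a single edge must perform the entire crossing.
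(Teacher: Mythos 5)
Your proposal is correct and follows essentially the same route as the paper's proof: the paper also opens with the observation that the \enquote{points to} relation is monotone with respect to the vertical position of anchors (stated there with \emph{height} rather than \emph{depth}), derives from strong connectedness that all anchors in $\mathrm{anchors}(S)$ sit at the same level and hence are siblings, and then says \enquote{we conclude by again using connectedness}. Your Step~3 is a correct and welcome unpacking of that last sentence, making explicit the gap-crossing argument that the paper leaves implicit.
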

\begin{proof}
  If $i_p$ points to $i_q$ then $\mathrm{height}(\mathrm{anchor}(i_p))
  \leqslant \mathrm{height}(\mathrm{anchor}(i_q))$. Thus, we first deduce that
  all anchors have the same height. Thus, in this case, $i_p$ can only point to
  $i_q$ if $\mathrm{anchor}(i_p)$ is a (not necessarily proper) sibling of
  $\mathrm{anchor}(i_q)$. We conclude by again using connectedness.
\end{proof}
This claim allows us to define $\overline{\mathrm{anchors}}(S)$ as the set consisting of the nodes from $\mathrm{anchors}(S)$ and the siblings between them in the left-to-right order --- note that all these siblings are iterable, unless $S$ is the singleton containing the root.

Let us call $\mathcal{M}$ the set of strongly connected components that are
minimal, i.e.\ that have no incoming edge. For each $S\in \mathcal{M}$, we
consider the infix $\mathrm{block}(S)$ of $w$ obtained by taking the leaves that
descend from the nodes in $\overline{\mathrm{anchors}}(S)$.
\begin{claim}
  The infixes $\mathrm{block}(S)$ for $S\in\mathcal{M}$ are non-overlapping in $w$.
\end{claim}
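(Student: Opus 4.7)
The plan is to argue by contradiction: assume distinct $S, S' \in \mathcal{M}$ have blocks sharing some leaf $\ell$, and derive an impossibility. Since $\overline{\mathrm{anchors}}(S)$ is a set of siblings in the forest, the ancestor-chain of $\ell$ meets it in exactly one node $v$; similarly it meets $\overline{\mathrm{anchors}}(S')$ in exactly one node $v'$. Both $v$ and $v'$ lie on the same ancestor-chain of $\ell$, hence are comparable in the forest. The proof splits according to whether (up to swapping $S$ and $S'$) $v$ is a strict ancestor of $v'$, or $v = v'$.

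In the strict-ancestor case, the goal is to produce an edge from $S'$ to $S$ in the graph of SCCs, contradicting the minimality of $S$. By definition of $\overline{\mathrm{anchors}}$ I select $p \in S$ and $q \in S'$ such that $\mathrm{anchor}(i_p)$ is a sibling of $v$ at sibling-distance $\leq |M|$ and $\mathrm{anchor}(i_q)$ is a sibling of $v'$ at sibling-distance $\leq |M|$. The parent of $v'$ is either $v$ or a descendant of $v$, so $\mathrm{anchor}(i_q)$ (a child of that parent) is itself a descendant of $v$. Then $\mathrm{anchor}(i_p)$ is a sibling at distance $\leq |M|$ of $v$, and $v$ is an ancestor of $\mathrm{anchor}(i_q)$; by the definition of \emph{observes} we conclude that $i_q$ points to $i_p$, the required edge.

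The more delicate case $v = v'$ is the main obstacle. Here the two blocks trivially share all descendants of $v$, and no ancestor trick is available; instead one must exploit that the anchor sets of $S$ and of $S'$ are then sets of siblings with the same parent $u$. The auxiliary fact I would prove is that, for any two distinct SCCs whose anchors share a parent, the $\overline{\mathrm{anchors}}$-sets are disjoint. This rests on two remarks: first, any two anchors of distinct SCCs among the children of $u$ must lie at sibling-distance strictly more than $|M|$, for otherwise they would point to each other and collapse into the same SCC; second, every element of $\overline{\mathrm{anchors}}(T)$ lies within $\lfloor |M|/2 \rfloor$ sibling-positions of some element of $\mathrm{anchors}(T)$ (take the nearer of the two surrounding consecutive anchors, which are at distance $\leq |M|$). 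Applying the second remark to $v$ in each of $\overline{\mathrm{anchors}}(S)$ and $\overline{\mathrm{anchors}}(S')$ yields anchors of the two distinct SCCs at sibling-distance $\leq 2\lfloor |M|/2 \rfloor \leq |M|$, contradicting the first remark and closing the case. The tight arithmetic $\lfloor |M|/2 \rfloor + \lfloor |M|/2 \rfloor \leq |M|$ is exactly what makes the ``filled-in'' construction $\overline{\mathrm{anchors}}$ compatible with the non-overlap conclusion.
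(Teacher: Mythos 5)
Your proof is correct and follows essentially the same case split as the paper: reduce overlap to a common leaf, observe the two $\overline{\mathrm{anchors}}$ nodes on its ancestor chain are comparable, and in the strict-ancestor case produce an incoming edge violating minimality of the higher SCC. Your handling of the equal case differs only cosmetically: you bound both distances to the shared node by $\lfloor |M|/2 \rfloor$ and apply the triangle inequality, whereas the paper reaches the same conclusion (overlapping anchor-intervals force an anchor of one SCC within $|M|$ of an anchor of the other) by noting an interval endpoint of one set must fall inside the other interval.
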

\begin{proof}
  If they were overlapping, then there would be some leaf that is a common descendant of some two nodes $\alpha_1 \in \overline{\mathrm{anchors}}(S_1)$ and $\alpha_2 \in \overline{\mathrm{anchors}}(S_2)$ for $S_1 \neq S_2$ and $S_1,S_2 \in \mathcal{M}$. This would mean that one of $\alpha_1$ and $\alpha_2$ is an ancestor of the other.
  \begin{itemize}
    \item Suppose first that they are equal. Then $\overline{\mathrm{anchors}}(S_1) \cap \overline{\mathrm{anchors}}(S_2) \neq \varnothing$. This can only happen if some node in $\mathrm{anchors}(S_1)$ and some other node in $\mathrm{anchors}(S_2)$ are at distance at most $|M|$, contradicting the fact that $S_1$ and $S_2$ are distinct strongly connected components. 
    \item We may now assume w.l.o.g.\ that $\alpha_1$ is a strict ancestor of $\alpha_2$ --- and therefore of all its siblings, in particular of some $\beta_2 \in \mathrm{anchors}(S_2)$. There also exists $\beta_1 \in \mathrm{anchors}(S_1)$ that is a sibling at distance at most $|M|$ of $\alpha_1$. By definition, $\beta_2$ observes $\beta_1$, contradicting the minimality of the component $S_1$. \qedhere
  \end{itemize}
\end{proof}
Let $\mathcal{S \subseteq M}$ be any non-empty subset. We define
$w^{\mathcal{S}}_n$ as the word obtained from $w$ by pumping $n$ times all the infixes
$\mathrm{block}(S)$ for $S\in\mathcal{S}$. Clearly, $|w^{\mathcal{S}}_n| = O(n)$.
\begin{claim}
  $\#\varphi(w^{\mathcal{S}}_n) \geqslant n^{|\mathcal{S}|}$.
\end{claim}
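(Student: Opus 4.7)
The plan is to exhibit, for each map $a \colon \mathcal{S} \to \{1,\dots,n\}$, a distinct satisfying tuple $\vec{j}^{\,(a)}$ of $\varphi$ in $w^{\mathcal{S}}_n$; this will yield the desired $n^{|\mathcal{S}|}$ witnesses. The informal picture is that pumping creates $n$ identical copies of each $\mathrm{block}(S)$ with $S\in\mathcal{S}$, and $a_S$ selects which of these copies carries the distinguished positions originally living inside $\mathrm{block}(S)$.

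Concretely, I would define $\vec{j}^{\,(a)}$ coordinate-wise: by the preceding non-overlap claim, each $i_p$ lies in at most one pumped block $\mathrm{block}(S)$; when it does, let $j_p$ be the image of $i_p$ in the $a_S$-th copy, and otherwise let $j_p$ be the unique image of $i_p$ in $w^{\mathcal{S}}_n$. Injectivity of $a \mapsto \vec{j}^{\,(a)}$ is then immediate: for each $S\in\mathcal{S}$, picking any $p\in S$ gives $i_p\in\mathrm{block}(S)$, so $j_p$ shifts whenever $a_S$ does.

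Verifying $w^{\mathcal{S}}_n \models \varphi(\vec{j}^{\,(a)})$ reduces, via the recognizability of $\varphi$ by $\mu$, to checking that the letter at each distinguished position is unchanged (trivial, since pumping only copies) and that the $\mu$-value of each maximal distinguished-free gap is preserved. Because every $\mathrm{block}(S)$ contains at least one distinguished position of $\vec{i}$, no such gap can entirely enclose a pumped block; hence a pumped gap $u'_t$ differs from the original $u_t$ only by whole extra copies of the block(s) containing its endpoints --- specifically the last $n-a_S$ copies inserted just after $j_{p_t}$ if $i_{p_t}\in\mathrm{block}(S)$, and symmetrically the first $a_{S'}-1$ copies inserted just before $j_{p_{t+1}}$ if $i_{p_{t+1}}\in\mathrm{block}(S')$.

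The main obstacle is to argue that these extra insertions leave $\mu(u'_t)$ unchanged. To this end I would unfold the forest structure of $\mathrm{block}(S)$: it is a consecutive range of iterable siblings $\alpha_1,\dots,\alpha_r$ under a common parent, all sharing the same $\mu$-value $m$ with $m^{|M|}=m^{|M|+1}$, and iterability supplies at least $|M|$ further $m$-valued siblings on each side of the range. Comparing the two factorisations, the discrepancy is localised to a sub-product of the form $m^{K+r(n-a_S)}$ on the pumped side versus $m^K$ on the original side, where $K\geq|M|$ counts the $m$-valued siblings of the parent past the iterable sibling containing $i_{p_t}$ (including in particular the $\geq|M|$ post-block siblings). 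Since $m^K = m^{K+d}$ for every $d\geq 0$ whenever $K\geq|M|$, these factors coincide; the symmetric calculation near $j_{p_{t+1}}$ closes the argument.
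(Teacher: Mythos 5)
Your overall strategy matches the paper's: pick one of the $n^{|\mathcal{S}|}$ copy-assignments, read off positions $\vec{j}^{\,(a)}$ by origin semantics, and verify satisfaction by recognizability, i.e.\ by checking that the $\mu$-values of the maximal gaps between consecutive distinguished positions are unchanged after pumping. The paper delegates this last step to ``the standard argument'' of reconstructing the $\mu$-value of a gap from the nodes observed by its endpoints in a forest for $w^{\mathcal{S}}_n$ obtained by pumping the forest of $w$; you instead unfold the computation directly as a sub-product comparison. That is a legitimate way to fill in the sketch, and your setup (the block is a run of iterable siblings with a common value $m$ satisfying $m^{|M|}=m^{|M|+1}$; the pump inserts a factor $m^{r(n-a_S)}$; idempotency past exponent $|M|$ absorbs it) is the right one.

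However, the key inequality $K\geq|M|$ is asserted rather than proved, and as stated it is not automatic. You write that $K$ ``counts the $m$-valued siblings of the parent past the iterable sibling containing $i_{p_t}$ (including in particular the $\geq|M|$ post-block siblings)'', but what actually needs to appear as a full sub-product of $\mu(u_t)$ is $m$ raised to the number of siblings lying \emph{entirely inside the gap} $u_t$. If the next distinguished position $i_{p_{t+1}}$ descends from a sibling $\alpha_{b+j}$ of the block with $j\leq|M|$, the gap stops short of the $\geq|M|$ post-block siblings and your $K$ can drop below $|M|$, breaking the cancellation $m^{K}=m^{K+d}$. Ruling this configuration out is exactly where the minimality of $S\in\mathcal{M}$ must enter, and you never invoke it. Concretely: if $\mathrm{anchor}(i_{p_{t+1}})$ were at or below $\alpha_{b+j}$ with $j\leq|M|$, it would observe $\alpha_b\in\mathrm{anchors}(S)$, producing an incoming edge into $S$ --- a contradiction with $S$ being a source SCC; and one must separately treat the case where $\alpha_{b+j}$ is not iterable (so the anchor of $i_{p_{t+1}}$ sits strictly above the block's parent), which is where the paper's appeal to the full ``reconstruct from observed nodes'' lemma becomes genuinely convenient rather than a mere shorthand. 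Your proof also does not say anything about $S$ being a singleton whose $\mathrm{anchors}$ set is $\{\mathrm{root}\}$, where there are no iterable siblings to pump; the paper's sketch is equally silent there, so I will not hold it against you, but the $K\geq|M|$ step is a real missing piece.
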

\begin{proof}
  For each pumped block, let us choose one of its copies in $w^{\mathcal{S}}_n$; there are $n^{|\mathcal{S}|}$ possible choices.
  Let $j_{1},\dots,j_{k}$ be positions in $w^{\mathcal{S}}_n$:
  \begin{itemize}
    \item that correspond to $i_{1},\dots,i_{k}$ in $w$, according to the intuitive origin semantics of pumping;
    \item such that if $i_{x}$ is in a pumped block, then $j_{x}$ is in its chosen copy.
  \end{itemize}
  We claim that $w^{\mathcal{S}}_n \models \varphi(j_{1},\dots,j_{k})$.
          Idea: use the fact that the nodes in $\overline{\mathrm{anchors}}(S)$ for $S \in \mathcal{S}$ are iterable, plus the standard
  argument that one can reconstruct the value $\mu(w[j_{x}+1 \dots j_{x+1} - 1])$ from the nodes
  observed by $j_{x}$ and $j_{x+1}$ in a forest for $w^{\mathcal{S}}_n$ deduced by pumping.
\end{proof}
Furthermore, for each query $\eta_p$ ($1\leqslant p\leqslant \ell$), since $\mu$
recognizes it, it can also be evaluated using the $\mu$-forest, so we have in the
above pumping construction:
\[ (\text{no positions in the pumped infixes}\ \mathrm{block}(S)\ \text{satisfy}\ \eta_p) \implies
  \#\eta_p(w^{\mathcal{S}}_n) = O(1)\]
Let $P(\mathcal{S}) = \{p \in \{1,\dots,\ell\} \mid \exists S \in \mathcal{S} :
\text{some position in}\ \mathrm{block}(S)\ \text{satisfies}\ \eta_p\}$. We then have
\[ \#\eta_1(w^{\mathcal{S}}_n) \times \dots \times \#\eta_\ell(w^{\mathcal{S}}_n) = O(n^{|P(\mathcal{S})|})
  \quad\text{therefore}\quad |\mathcal{S}| \leqslant |P(\mathcal{S})|\]
Since this holds for all non-empty $\mathcal{S \subseteq M}$, by Hall's marriage
theorem, there exists an injection $\mathcal{M} \hookrightarrow
\{1,\dots,\ell\}$ that maps each $S\in \mathcal{M}$ to a $p_{S}$ such that the
infix $\mathrm{block}(S)$ in $w$ contains some position $j$ such that $w \models
\eta_{p_S}(j)$. We can fix a choice of injection $S \mapsto p_S$, e.g.\ the
lexicographically smallest one. Let us define, for $p\in\{1,\dots,\ell\}$,
\[ j_p =
  \begin{cases}
    \text{the leftmost position of}\ w\ \text{inside}\ \mathrm{block}(S)\ \text{satisfying}\
    \eta_{p},\ \text{when}\ p = p_S \\
    1\ \text{when there is no such}\ S
  \end{cases}
\]
\begin{claim}
  The functional relation that maps $(w,i_1,\dots,i_k)$ to $(j_1,\dots,j_\ell)$,
  according to the above recipe, can be uniformly defined over all words by some MSO formula
  $\psi(x_1,\dots,x_k,y_1,\dots,y_k)$. If $M$ is aperiodic then we can have $\psi$ in FO.
\end{claim}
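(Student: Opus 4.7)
The plan is to walk through the recipe that builds $(j_1,\dots,j_\ell)$ from $(w,i_1,\dots,i_k)$ and verify, step by step, that every intermediate object --- the forest, the anchors, the \emph{points to} graph, its strongly connected components, the minimal ones, their blocks, the Hall matching $S \mapsto p_S$, and finally the $j_p$'s --- is MSO-definable (FO-definable in the aperiodic case) as a predicate on $w$ with parameters among the $x_i$'s and $y_p$'s. The formula $\psi$ is then simply the conjunction of these intermediate definitions, together with a clause asserting that each $y_p$ equals the corresponding $j_p$.

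The only really content-bearing step is to internalise the forest inside the logic. Since the forest-producing function is rational (resp.\ FO-rational) and its output has bounded height, the standard correspondence between rational functions and MSO transductions (resp.\ its FO analogue) yields MSO (resp.\ FO) formulas on $w$ that identify each forest node with the interval of $w$-positions it spans, together with its depth, its parent, and its ordered list of children. From this we extract MSO/FO predicates for \emph{ancestor}, \emph{sibling at distance at most $|M|$}, and \emph{iterable}, all with bounded constants since the height bound and $|M|$ are fixed; crucially, bounded height lets \emph{ancestor} be unrolled as a fixed finite nesting of \emph{parent}, so the predicate stays in FO. The \emph{anchor} of a leaf and the \emph{points to} relation on leaves are then immediate translations of their definitions.

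The rest is combinatorial bookkeeping on a fixed-size set. The pointing graph lives on $\{1,\dots,k\}$ with $k$ constant, so its strongly connected components, the minimal family $\mathcal{M}$, and the lexicographically smallest Hall injection $S \mapsto p_S$ are all determined by the propositional isomorphism type of the graph; this amounts to a finite disjunction over the $2^{k^2}$ possible edge patterns and requires no further quantification. For each $S \in \mathcal{M}$, the set $\overline{\mathrm{anchors}}(S)$ is definable as in the earlier claim, the block is the union of its bounded-depth descendant leaves, \enquote{the block contains an $\eta_p$-witness} is an existential, and \emph{leftmost such witness} is standard. The main obstacle is simply to audit which ingredients require aperiodicity: only the FO-rational forest construction does, and this is precisely what motivated the weaker forest condition $\mu(w_1)^{|M|} = \mu(w_1)^{|M|+1}$ in the definition of $\mu$-forest; every subsequent step is manifestly FO once the forest itself is.
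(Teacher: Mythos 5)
Your proposal is correct and takes essentially the same approach as the paper: compute $\mathcal{M}$ and the rest of the recipe by an FO query on the forest, then pull that query back through the rational (resp.\ FO-rational) forest-producing function to obtain an MSO (resp.\ FO) formula on the input word; you just spell out the intermediate FO-definable predicates in more detail. One small imprecision: the lexicographically smallest Hall injection is not determined by the points-to graph's isomorphism type alone, but also by the bounded bipartite incidence pattern recording which blocks contain witnesses for which $\eta_p$ --- you do define those witness predicates later, so this is a phrasing issue rather than a gap.
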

\begin{proof}
  The formula $\psi$ has to compute $\mathcal{M}$, which is doable by an FO
  query on the forest. We combine this with the fact that FO queries on the
  output of a rational (resp.\ FO-rational) function can be pulled back to MSO (resp.\ FO) queries on the input.
\end{proof}

We now show that the number of preimages by this function is bounded. For each
$S$, the anchor of $j_{p_S}$ must be equal to or below the anchor of some $i_q$ where $q
\in S$; therefore, $j_{p_S}$ points to $i_q$. Since $\mathcal{M}$ consists of all
minimal strongly connected components, each $i_r$ for $r \in \{1,\dots,\ell\}$
is reachable by a path of length at most $k$ in the graph for the
\enquote{points to} relation whose vertices are $\{1,\dots,|w|\}$ (note that the
graph that we considered before can be seen as an induced subgraph), by some
$j_{p_S}$ where $S\in\mathcal{M}$. To conclude, recall that:
\begin{claim}
  Over forests of bounded height, a leaf points to a bounded number of other
  leaves.
\end{claim}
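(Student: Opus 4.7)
The plan is to bound separately two quantities whose product dominates: (a) the number of nodes observed by $\mathrm{anchor}(i)$, and (b) for each such observed node $\beta$, the number of leaves $j$ whose anchor is $\beta$. Since the height $h$ of the forest is bounded and $M$ is fixed, both factors will be constants.

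For (a), let $\alpha = \mathrm{anchor}(i)$. By definition, the nodes observed by $\alpha$ are the sibling-at-distance-$\leq |M|$ of some ancestor of $\alpha$ (itself included). There are at most $h+1$ ancestors of $\alpha$, and each contributes at most $2|M|+1$ observed siblings (itself plus $|M|$ on each side), yielding at most $(h+1)(2|M|+1)$ observed nodes.

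For (b), fix an observed node $\beta$ and count leaves $j$ with $\mathrm{anchor}(j)=\beta$. By the definition of anchor, either $\beta$ is the root and $j$ has no iterable ancestor, or $\beta$ is iterable and every strict descendant of $\beta$ lying on the path from $\beta$ down to $j$ is non-iterable (otherwise a lower iterable ancestor would exist). The key combinatorial fact is that any node has at most $2|M|$ non-iterable children, namely the $|M|$ leftmost and $|M|$ rightmost ones in its sibling group. Since the path from $\beta$ to $j$ has length at most $h$, there are at most $(2|M|)^h$ such leaves.

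Multiplying the two bounds gives at most $(h+1)(2|M|+1)(2|M|)^h$ leaves pointed to by $i$, which is a constant depending only on the bounded height $h$ and on $|M|$. The only mild subtlety is to treat the root case for the anchor uniformly (there is just one root, so it contributes the same $(2|M|)^h$ bound) and to double-check that the tightening conditions in the definition of a $\mu$-forest do not alter the elementary count of non-iterable children per node, which they do not because the bound depends only on the position among siblings.
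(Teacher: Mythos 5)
Your proof is correct. The paper does not actually spell out an argument for this claim --- it only points to \cite[Section~2.3]{gaetanPhD} --- so there is no explicit in-paper proof to compare against, but your self-contained argument is exactly the expected one. Your two-step decomposition (bound the set of nodes observed by $\mathrm{anchor}(i)$, then bound the number of leaves anchored at each such node) is the natural route, and both bounds are right: at most $(h+1)(2|M|+1)$ observed nodes, and, since every node has at most $2|M|$ non-iterable children (the $|M|$ leftmost and $|M|$ rightmost), at most $(2|M|)^h$ leaves per anchor. One point worth stating slightly more carefully: $(2|M|)^h$ bounds the number of leaves (not merely paths of a fixed length) because the ``non-iterable descendants of $\beta$'' form a subtree of branching factor $\leq 2|M|$ and height $\leq h$, and such a tree has at most $(2|M|)^h$ leaves by a one-line induction. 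You also correctly note that if an observed node $\beta$ is neither iterable nor the root, it is the anchor of no leaf at all, and that the root case contributes the same bound. The only cosmetic remark is that you need not ``double-check'' that the idiosyncratic condition $\mu(w_1)^{|M|} = \mu(w_1)^{|M|+1}$ matters: iterability is defined purely by position among siblings, so it is indeed irrelevant here, as you say.
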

\begin{proof}
  All the arguments (but not the exact statement) may be found in~\cite[Section~2.3]{gaetanPhD}.
\end{proof}

\section{Counterexample to a tempting generalisation}

For $w\in\{a,b\}^*$, we have
\[ \#(a(x_1)\land b(x_2))(w) = \#a(w) \times \#b(w) \leqslant \#a(w)^2 + \#b(w)^2 = \#((a(y_1)\land
  a(y_2))\lor(b(y_1)\land b(y_2)))(w) \]
And yet:

\begin{theorem}
  There does not exist any MSO formula $\psi(x_1,x_2,y_1,y_2)$,
  which defines for each word a total functional relation
  \[ \{\vec{i}  \mid w
    \models a(i_1) \land b(i_2) \} \to \{\vec{j} \mid w[j_1] = w[j_2]\} \]
  where each $\vec{j}$ has $O(1)$ many preimages $\vec{i}$.
\end{theorem}
\begin{proof}
  For the sake of contradiction, assume that such a $\psi$ exists, with the bound $N\in\bbN$ on the number of preimages.
  Let
  \begin{align*}
    \psi'_i(x_1,x_2,y_1,y_2) &= (x_1,x_2)\ \text{is the $i$-th pair such that}\ \psi(\vec{x},\vec{y})  \\
    \varphi'_i(y_1,y_2) &= ((a(y_1)\land a(y_2))\lor(b(y_1)\land b(y_2)) \land \lnot (\exists \vec{x}.\, \psi'_i(\vec{x},\vec{y}))
  \end{align*}
  (ordering the pairs in e.g.\ lexicographic order). Let
  \[ f\colon w \in \{a,b\}^* \mapsto \#\varphi'_1(w) + \dots + \#\varphi'_N(w)\]
  By definition, $f$ is an \emph{$\bbN$-polyregular function}, cf.~\cite[Definition~5.10]{gaetanPhD}. Furthermore,
  \[ \forall w \in \{a,b\}^*,\quad f(w) = N(\#a(w)^2 + \#b(w)^2) - \#a(w) \times \#b(w) \]
  Since the polynomial $P(X,Y) = N(X^2+Y^2) - XY$ has a maximal monomial $-XY$ (for divisibility) with a negative coefficient, this contradicts~\cite[Theorem~17]{Aliaume} on commutative $\bbN$-polyregular functions.  
\end{proof}

\bibliographystyle{alphaurl}
\bibliography{bi}

\end{document}